\newcommand{\leo}[1]{#1}
\newcommand{\leonie}[1]{#1}
\newcommand{\cT}{\mathcal{T}}
\newcommand{\cD}{\mathcal{D}}
\newcommand{\cP}{\mathcal{P}}
\newcommand{\tb}{\overline{\mathcal{T}}}
\newcommand{\getss}{=}
\newtheorem{theorem}{Theorem}
\newtheorem{lemma}{Lemma}
\theoremstyle{definition}
\newtheorem{definition}{Definition}
\journal{a journal}
\begin{document}

\begin{frontmatter}

%% Title, authors and addresses

%% use the tnoteref command within \title for footnotes;
%% use the tnotetext command for theassociated footnote;
%% use the fnref command within \author or \address for footnotes;
%% use the fntext command for theassociated footnote;
%% use the corref command within \author for corresponding author footnotes;
%% use the cortext command for theassociated footnote;
%% use the ead command for the email address,
%% and the form \ead[url] for the home page:
%% \title{Title\tnoteref{label1}}
%% \tnotetext[label1]{}
%% \author{Name\corref{cor1}\fnref{label2}}
%% \ead{email address}
%% \ead[url]{home page}
%% \fntext[label2]{}
%% \cortext[cor1]{}
%% \affiliation{organization={},
%%             addressline={},
%%             city={},
%%             postcode={},
%%             state={},
%%             country={}}
%% \fntext[label3]{}

\title{An algorithm for reconstructing level-2 phylogenetic networks from trinets}

%% use optional labels to link authors explicitly to addresses:
%% \author[label1,label2]{}
%% \affiliation[label1]{organization={},
%%             addressline={},
%%             city={},
%%             postcode={},
%%             state={},
%%             country={}}
%%
%% \affiliation[label2]{organization={},
%%             addressline={},
%%             city={},
%%             postcode={},
%%             state={},
%%             country={}}

\author[inst1]{Leo van Iersel\fnref{fn1}}
\ead{L.J.J.vanIersel@tudelft.nl}
\author[inst1]{Sjors Kole}
\author[inst2]{Vincent Moulton}
\ead{V.Moulton@uea.ac.uk}
\author[inst1]{Leonie Nipius}

\affiliation[inst1]{organization={Delft Institute of Applied Mathematics, Delft
    University of Technology},%Department and Organization
            addressline={Mekelweg 4}, 
            city={Delft},
            postcode={2628 CD}, 
%            state={State One},
            country={The Netherlands}}
            
            \affiliation[inst2]{organization={School of Computing Sciences, University of East Anglia},%Department and Organization
            addressline={NR4 7TJ}, 
            city={Norwich},
%            state={State One},
            country={United Kingdom}}

\fntext[fn1]{Research funded in part by the Netherlands Organization for Scientific Research (NWO) Vidi grant 639.072.602.}

\begin{abstract}
Evolutionary histories for species that cross with one another or exchange genetic material can be represented by leaf-labelled, directed graphs called \emph{phylogenetic networks}. A major challenge in the burgeoning area of phylogenetic networks is to develop algorithms for building such networks by amalgamating small networks into a single large network. The \emph{level} of a phylogenetic network is a measure of its deviation from being a tree; the higher the level of network, the less treelike it becomes. Various algorithms have been developed for building level-1 networks from small networks. However, level-1 networks may not be able to capture the complexity of some data sets. In this paper, we present a polynomial-time algorithm for constructing a rooted binary level-2 phylogenetic network from a collection of 3-leaf networks or \emph{trinets}. Moreover, we prove that the algorithm will correctly reconstruct such a network if it is given all of the trinets in the network as input. The algorithm runs in time $O(t\cdot n+n^4)$ with~$t$ the number of input trinets and~$n$ the number of leaves. We also show that there is a fundamental obstruction to constructing level-3 networks from trinets, and so new approaches will need to be developed for constructing level-3 and higher level-networks.
\end{abstract}

%%Graphical abstract
%\begin{graphicalabstract}
%\includegraphics{grabs}
%\end{graphicalabstract}

%%Research highlights
% \begin{highlights}
% \item First algorithm for constructing level-2 phylogenetic networks from trinets
% \item Algorithm runs in polynomial time for any set of trinets
% \item Algorithm is guaranteed to construct the correct network if it is level-2 and all its trinets are given as input
% \item Example showing that not all level-3 networks are encoded by trinets
% \vspace{15cm}
% \end{highlights}

\begin{keyword}
directed graph\sep phylogenetic network\sep polynomial-time algorithm\sep subnetworks\sep reconstruction
%% keywords here, in the form: keyword \sep keyword

%% PACS codes here, in the form: \PACS code \sep code

%% MSC codes here, in the form: \MSC code \sep code
%% or \MSC[2008] code \sep code (2000 is the default)

\end{keyword}

\end{frontmatter}

%% \linenumbers

%% main text
\section{Introduction}

Phylogenetic networks are a generalization of phylogenetic trees that are commonly used to represent the evolutionary histories of species that cross with one another or exchange genetic material, such as plants and viruses. There are several classes of phylogenetic networks and various ways have been devised to build them – see e.g. \cite{Elworth2019,steel2016phylogeny} for recent surveys. Mathematically speaking, a {\em phylogenetic network} on a set of species $X$ is basically a directed acyclic graph, with a single source or {\em root}, such that every sink or {\em leaf} has indegree 1 and the set of leaves is equal to $X$. In this paper, we shall only consider {\em recoverable, binary} networks (or {\em networks} for short), that is, phylogenetic networks that satisfy a certain condition on the ancestors of $X$ and in which the root has outdegree 2 and all other, non-leaf, vertices have degree 3 (see Figure~\ref{fig:counter_example} for some examples). Precise definitions are given in Section~\ref{sec:prelim}.

\begin{figure*}
    \centering
    \includegraphics{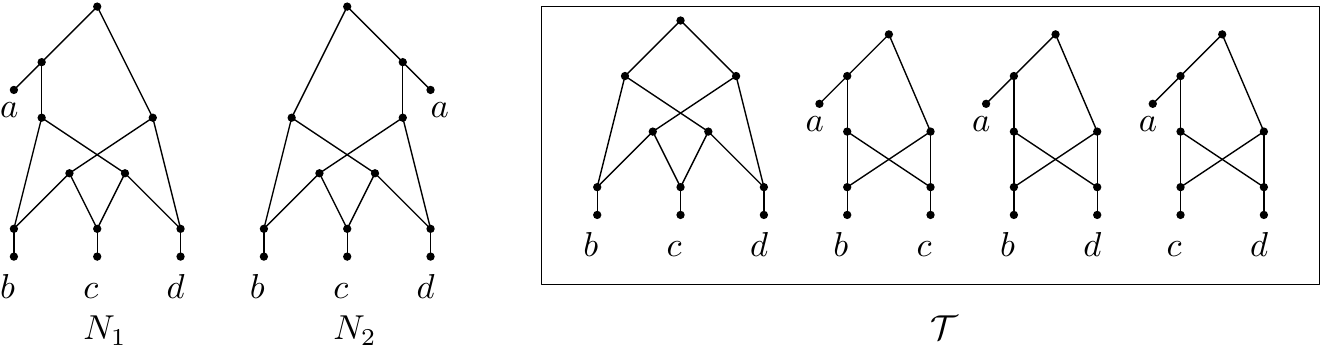}
    \caption{Left: Two distinct level-3 networks $N_1$ and $N_2$ on the set $X=\{a,b,c,d\}$. Right: The set of 
    trinets~$\cT$ that is contained in both $N_1$ and $N_2$.\label{fig:counter_example}}
\end{figure*}

%\begin{figure}[h]
%    \centering
%    \includegraphics{counter_example_trinets.pdf}
%    \caption{The four trinets of each of the level-3 networks displayed in Figure~\ref{fig:counter_example}.\label{fig:counter_example_trinets}}
%\end{figure}

Recently, there has been growing interest in the problem of building a network with leaf-set $X$ from a collection of networks each of which having leaf-set equal to some subset of $X$ in such a way that the input networks are each contained in the final network. Early work on this so-called {\em supernetwork problem} focused on building up networks from {\em phylogenetic trees}, that is, phylogenetic networks whose underlying graph is a tree. Several results have been presented for this problem, including algorithms for constructing networks from triplets, \leonie{which are} 3-leaved phylogenetic trees, (e.g. \cite{huber2010practical}) and from collections of phylogenetic trees all on leaf-set $X$ (e.g. \cite{willson2010regular}) – for a recent summary of these approaches see \cite{semple2021trinets}. However, an important issue with this strategy is that phylogenetic trees do not necessarily {\em encode} phylogenetic networks, i.e., there are examples of distinct (non-isomorphic) networks that contain the same set of phylogenetic trees (see e.g. \cite{gambette2012encodings}), making it impossible to uniquely reconstruct such networks from their trees. 

Motivated by this issue, in \cite{huber2013encoding} it was proposed to build networks from collections of 3-leaved networks, or {\em trinets}. In that paper, the authors focused on building level-1 networks\footnote{In fact they considered the somewhat more general class of 1-nested networks.} where, in general, {\em level-$k$ networks} are networks that can be converted into a tree by deleting at most $k$ arcs from each biconnected component. In particular, they showed that level-1 networks are encoded by the trinets that they contain, and gave an algorithm for constructing a level-1 network on $X$ from its trinets that is polynomial in $|X|$ (see also \cite{trilonet} for a more general algorithm). In \cite{level2trinets} the encoding result was extended to the more general class of level-2 networks, and also to the distinct and quite broad class of so-called {\em tree-child} networks. Recently, in \cite{semple2021trinets} it was also shown that {\em orchard} networks, which generalise tree-child networks, are encoded by their trinets, and an algorithm was given for constructing an orchard network from its trinets that is polynomial in the size of the vertex set of the network (whose size is not necessarily polynomial in $|X|$).

Intriguingly, in \cite{huber2015much} it was shown that, as with trees, trinets do {\em not} encode networks in general. Indeed, in \cite[p.28]{semple2021trinets} it was shown that even level-4 networks are not encoded by their trinets and, since level-2 networks are encoded by their trinets (see above), it was asked whether or not level-3 networks are encoded by their trinets (see also \cite{dagstuhl}). In the first result of this paper we answer this question – in particular, the two networks $N_1$ and $N_2$ in Figure~\ref{fig:counter_example} are level-3 and are easily seen to be distinct and to contain the same set of trinets \leonie{(see~\cite{leonie})}. Hence, level-$k$ networks are encoded by their trinets only if $k \le 2$. As the algorithm in \cite{huber2013encoding} can be used to uniquely reconstruct a level-1 network from its trinets, this leaves open the question of finding a polynomial algorithm for building a level-2 network from its trinets, which is the purpose of the rest of this paper. In particular, we shall present an algorithm which constructs a level-2 network on $X$ from {\em any} set of trinets $\cT$ whose leaf-set union is $X$ that runs in $O(|\cT||X|+|X|^4)$ time (Algorithm~\ref{alg:alg}) and that is guaranteed to reconstruct a level-2 network from its set of trinets (Theorem~\ref{thm:induced}). We now proceed by presenting some preliminaries, after which we shall describe our level-2 algorithm. We will conclude with a brief discussion of our results.

%\vspace{1cm}

\section{Preliminaries}\label{sec:prelim}

We refer the reader to \cite[Chapter 10]{steel2016phylogeny} for more information on the terminology and basic
results on phylogenetic networks that we summarise in this section.

\begin{definition}
Let~$X$ be some finite set (corresponding to a set of species, say). 
A \emph{binary phylogenetic network (on~$X$)} is a directed acyclic graph with the following types of vertices:
a single \emph{root} with indegree~0 and outdegree~2;
\emph{tree-vertices} with indegree~1 and outdegree~2;
\emph{reticulations} with indegree~2 and outdegree~1; and
\emph{leaves} with indegree~1 and outdegree~0,
where the leaves are in one-to-one correspondence with the elements of~$X$.
\end{definition}

Let~$N$ be a binary phylogenetic network on~$X$, and suppose that $u,v$  are two vertices in the vertex set $V(N)$ of $N$. If there is a directed path from~$u$ to~$v$ (including the case that~$u=v$), then we say that~$u$ is an \emph{ancestor} of~$v$ and that~$v$ is a \emph{descendant} of~$u$. When~$(u,v)$ is an arc, we say that~$u$ is a \emph{parent} of~$v$ and that~$v$ is a \emph{child} of~$u$. We say that~$(u,v)$ is a \emph{cut-arc} if deleting~$(u,v)$ disconnects~$N$. A set~$A\subseteq X$ is called a \emph{cut-arc set} in~$N$ if~$A=X$ or~$A$ is the set of descendant leaves of~$v$ for some cut-arc~$(u,v)$. A cut-arc set~$A$ is \emph{minimal} if~$|A|>1$ and there is no cut-arc set~$B$ with~$|B|>1$ and~$B \subsetneq A$. A network is \emph{simple} if it has no minimal cut-arc set.

Now, suppose $A \subseteq X$. A \emph{lowest stable ancestor (LSA)} of~$A$ in~$N$ is a vertex~$v$ such that, for all~$a\in A$, all paths from the root to~$a$ contain~$v$, and such that there is no descendant~$u$ of~$v$ with $u\neq v$ that satisfies this property. It is not difficult to see that the lowest stable ancestor is always unique for any~$A\subseteq X$ \cite[p.263]{steel2016phylogeny}. We say that $N$ is \emph{recoverable} if $LSA(X)$ is the root of $X$. In this paper, for simplicity, we shall call a recoverable, binary phylogenetic network on $X$ a \emph{network}. \leo{Only in statements of theorems we will mention these restrictions explicitly.}

A \emph{biconnected component} of a network is a maximal subgraph not containing any cut-arcs. A network is \emph{level-$k$} if each biconnected component contains at most~$k$ reticulations. A level-$k$ network is \emph{strictly level-$k$} if it is not level-$k'$ for any $k'<k$. This paper will mainly focus on level-2 networks; see Figure~\ref{fig:cutarcs} for an example.

A network on~$A$ is a \emph{trinet} if~$|A|=3$ and a \emph{binet} if~$|A|=2$. If~$T$ is a trinet or binet on~$A$ then we also use~$L(T)$ to denote the set~$A$. Furthermore, for a set of trinets and/or binets~$\cT$, we define $L(\cT)=\cup_{T\in\cT}L(T)$. We will now define the restriction of a network to a subset of~$X$, which will be used to define the set of trinets contained in a network.

\begin{definition}
Let~$N$ be a network on~$X$ and~$A\subseteq X$. The \emph{restriction} of~$N$ to~$A$, denoted~$N|A$, is the network on~$A$ obtained from~$N$ by \leonie{deleting all vertices that are not on a path from $LSA(A)$ to an element of~$A$ and subsequently replacing parallel arcs by single arcs and suppresssing indegree-1 outdegree-1 vertices, until neither of these operations is applicable.}
\end{definition}

The set of trinets~$\cT(N)$ of a network~$N$ on~$X$ is defined as $\{N|A \mid A\subseteq X, |A|=3\}$. The set of binets and trinets~$\tb(N)$ of a network~$N$ on~$X$ is defined as $\{N|A \mid A\subseteq X, 2\leq |A|\leq 3\}$. Observe that $\tb(N)$ can be obtained from $\cT(N)$.

We say that two networks~$N,N'$ on~$X$ are \emph{equal} and write~$N=N'$ if there is an isomorphism~$f:V(N)\rightarrow V(N')$ such that, for all~$x\in X$,~$f(x)$ has the same label as~$x$.

The following theorem forms the basis for our new level-2 algorithm.

\begin{theorem}[\cite{level2trinets}]
Let~$N$ be a recoverable, binary level-2 network on~$X$ with~$|X|\geq 3$. Then there exists no recoverable network~$N'\neq N$ with $\cT(N)=\cT(N')$.
\end{theorem}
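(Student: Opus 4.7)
The plan is to prove the theorem by induction on $n=|X|$, exploiting the recursive structure given by the minimal cut-arc sets of $N$. The base case $|X|=3$ is immediate because $\cT(N)=\{N\}$ up to relabelling. For the inductive step, I would first argue that $\cT(N)$ determines the partition of $X$ into maximal cut-arc sets (equivalently, the minimal cut-arc sets together with the leaves not contained in any), then determines each block of the partition, then determines the simple ``top-level'' network obtained by contracting each minimal cut-arc set to a single leaf, and finally combines all of these into a unique $N$.

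First I would show how to recover the minimal cut-arc sets. Define a relation on $X$ by $a\approx b$ iff, for every $c\in X\setminus\{a,b\}$, the trinet $N|\{a,b,c\}$ has $\{a,b\}$ as a cut-arc set. A direct verification (using that $N$ is recoverable and binary) shows that the $\approx$-equivalence classes of size at least two are exactly the minimal cut-arc sets of $N$; intuitively, $\{a,b\}$ ``stays together'' in every trinet precisely when $a$ and $b$ sit under a common cut-arc. Since this relation is computed from $\cT(N)$, any $N'$ with $\cT(N')=\cT(N)$ must have the same minimal cut-arc sets and hence the same partition of $X$.

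Next, for each minimal cut-arc set $A$, the trinets in $\cT(N)$ with leaf-set contained in $A$ are exactly $\cT(N|A)$. If $|A|<|X|$, the induction hypothesis gives $N|A=N'|A$. To recover the top-level structure, let $\cP=\{A_1,\dots,A_k\}$ be the partition into minimal cut-arc sets and singleton-leaves, and let $N^\circ$ be the network on $\cP$ obtained from $N$ by contracting each $A_i$ to a single leaf; then $N^\circ$ is simple. The trinets of $N^\circ$ on any three blocks $\{A_i,A_j,A_\ell\}$ are obtained from $N|\{a,b,c\}$ for any representatives $a\in A_i,b\in A_j,c\in A_\ell$ by relabelling. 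Thus $\cT(N^\circ)$ is determined by $\cT(N)$.

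The heart of the argument, and the step I expect to be the main obstacle, is proving that a \emph{simple} recoverable binary level-2 network is uniquely determined by its trinets. Here the induction does not directly help because a simple network has no proper minimal cut-arc set. I would handle this by enumerating the finitely many topological types of simple level-2 ``generators'' (the level-2 biconnected graphs together with their pendant-arc positions) and, for each type, describing a distinguishing collection of trinets: the type of each trinet $N|\{a,b,c\}$ encodes, for instance, whether the three leaves sit above/below the reticulations of the generator and in which arc of the generator each leaf was subdivided. A careful case analysis shows that (i) the generator type is recognisable from the multiset of trinet isomorphism types, and (ii) the assignment of leaves to arcs of the generator is reconstructible from the pairwise and triplewise trinet information. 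The whole analysis would break down at level 3, consistent with the counterexample $N_1,N_2$ of Figure~\ref{fig:counter_example}, so any proof has to use the level-2 restriction in an essential way at exactly this point. Combining the reconstructed $N^\circ$ with the reconstructed $N|A_i$ by replacing each leaf $A_i$ of $N^\circ$ with $N|A_i$ yields a network uniquely determined by $\cT(N)$, completing the induction.
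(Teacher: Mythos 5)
First, a point of orientation: the paper does not prove this theorem at all --- it is imported verbatim from \cite{level2trinets} and used as a black box to justify that reconstructing \emph{some} network with the given trinet set suffices. So there is no in-paper proof to compare against; your proposal can only be judged against the strategy of the cited work and of the reconstruction machinery in Sections~2--3, which it does broadly mirror (decompose along cut-arc sets, reconstruct simple blobs by a case analysis over the level-2 generators of Figure~\ref{fig:gen}, recurse). Within that strategy, however, there are concrete gaps.

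The most serious one is your device for recovering minimal cut-arc sets. You define $a\approx b$ iff $\{a,b\}$ is a cut-arc set of $N|\{a,b,c\}$ for \emph{every} $c$. Take a minimal cut-arc set $A$ with $|A|\geq 3$, two leaves $a,b\in A$ lying on arc sides of the simple network $N|A$, and $c\in A$ a leaf below a reticulation of $N|A$: then $N|\{a,b,c\}$ retains the reticulation and $\{a,b\}$ is not a cut-arc set of it, so $a\not\approx b$. Your relation therefore detects only the two-element cut-arc sets, not minimal cut-arc sets of size $\geq 3$, and the claimed identification of $\approx$-classes with minimal cut-arc sets fails. The working replacement is necessarily \emph{asymmetric}: the closure digraph $\cD(\cT)$ (arc $(x,y)$ when $y$ is below $LSA(x,z)$ in every trinet on $\{x,y,z\}$), whose minimal sink sets are exactly the minimal cut-arc sets (Theorem~\ref{thm:sinksets} together with Lemma~\ref{lem:closure}). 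A second gap: contracting \emph{all} minimal cut-arc sets simultaneously does not produce a simple network --- for the caterpillar $(((a,b),c),d)$ the only minimal cut-arc set is $\{a,b\}$ and the quotient $((A_1,c),d)$ still has the cut-arc set $\{A_1,c\}$ --- so your $N^\circ$ need not be simple (nor is your parenthetical ``equivalently'' relating maximal and minimal cut-arc sets correct); one must collapse one minimal cut-arc set at a time and recurse, as Algorithm~\ref{alg:alg} does. Finally, the step you rightly call the heart --- that a \emph{simple} level-2 network is determined by its trinets --- is left entirely at the level of ``a careful case analysis shows''; that case analysis over the five generators, including the symmetric-side alignment issues for generator $2c$, is precisely the content of the cited encoding result and of Algorithms~\ref{alg:simple}--\ref{alg:continued}, so as written the proposal defers the proof exactly where it is needed.
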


%\todo{Does \cite{level2trinets} assume recoverable? Yes!}

\subsection{Generators}

Our algorithm will make heavy use of the underlying structure of biconnected components, which is called a ``generator'' and defined as follows.

\begin{definition}
Let~$N$ be a simple network. The \emph{underlying generator} of~$N$ is the directed multigraph~$G$ obtained from~$N$ by deleting all leaves and suppressing all indegree-1 outdegree-1 vertices. The arcs and indegree-2 outdegree-0 vertices of~$G$ are called \emph{sides}. The arcs are also called \emph{arc sides} and the indegree-2 outdegree-0 vertices also \emph{reticulation sides}. We say that leaf~$x$ is \emph{on side}~$S$ (or that side~$S$ \emph{contains}~$x$) if either
\begin{itemize}
    \item $S$ is a reticulation side of~$G$ and the parent of~$x$ in~$N$, or
    \item $S$ is an arc side of~$G$ obtained by suppressing indegree-1 outdegree-1 vertices of a path~$P$ in~$N$ and the parent of~$x$ lies on path~$P$.
\end{itemize}
\end{definition}

See Figure~\ref{fig:gen} for all underlying generators of simple level-1 and level-2 networks.

\begin{figure}[h]
    \centering
    \includegraphics{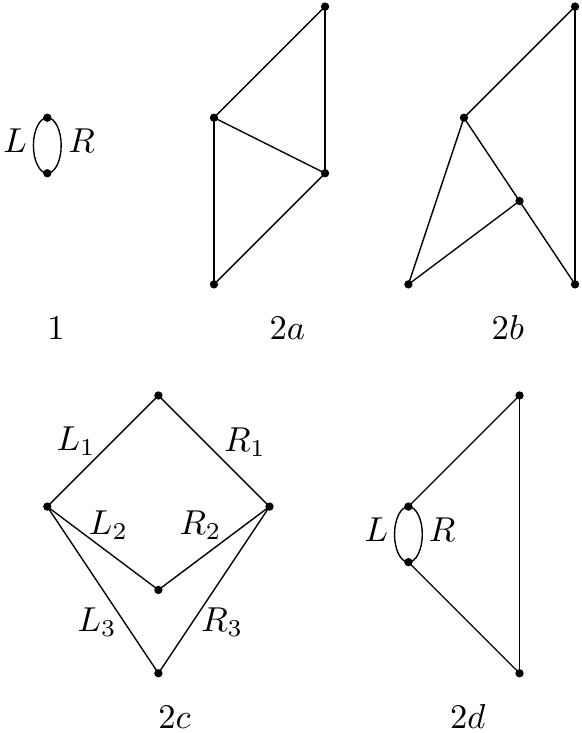}
    \caption{The only underlying generator of a simple level-1 network and the four underlying generators of simple level-2 networks~\cite{level2trinets}. Generator $2c$ has three sets of symmetric arc sides $\{L_1,R_1\},\{L_2,R_2\},\{L_3,R_3\}$ while generators~$1$ and~$2d$ have one set of symmetric arc sides~$\{L,R\}$. Generator~$2c$ is the only level-2 generator with symmetric reticulation sides.\label{fig:gen}}
\end{figure}

%\todo{Check whether~\cite{level2trinets} is the right reference for the generators.}

To \emph{attach} leaf~$x$ to a reticulation side~$S$ means adding~$x$ with an arc from~$S$ to~$x$. To \emph{attach} a list~$(x_1,\ldots ,x_l)$ of leaves to an arc side~$S$ means subdividing~$S$ to a path with~$l$ internal vertices~$p_1,\ldots ,p_l$ and adding leaves $x_1,\ldots ,x_l$ with arcs $(p_1,x_1),\ldots ,(p_l,x_l)$.

A trinet~$T\in\cT(N)$ is called a \emph{crucial} trinet of a simple network~$N$ if it contains a leaf on each arc side of the underlying generator~$G$ of~$N$ and, for each pair of parallel arcs in~$G$, a leaf on at least one of these two sides. Crucial trinets are of special interest because they have the same underlying generator as the network~$N$.

Two reticulation sides~$u,v$ of a generator~$G=(V,A)$ are \emph{symmetric} if there exists an automorphism $f:V\to V$ of~$G$ with~$f(u)=v$. The equivalence classes under this notion of symmetry are called \emph{sets of symmetric reticulation sides}.

Two arc sides~$(u,v)$, $(u',v')$ of a generator~$G=(V,A)$ are \emph{symmetric} if there exists an automorphism $f:V\to V$ of~$G$ with~$f(r)=r$ for each reticulation side~$r$ and such that~$u'=f(u)$ and~$v'=f(v)$. The equivalence classes under this notion of symmetry are called \emph{sets of symmetric arc sides}. For an example, see Figure~\ref{fig:gen}. The idea behind this definition is that the reticulation sides of~$G$ are parents of leaves in~$N$. In our algorithm, we will make heavy use of crucial trinets, which contain those leaves. Since they are labelled, we can distinguish them. 

\section{Algorithm}

%Once we have constructed a simple network~$N'$ for the cut-arc set~$A$, we need to extend it to a network on~$X$. For this, we construct a multiset~$\cT^*$ of trinets and binets from~$\cT$ by collapsing~$A$ to a single taxon~$a^*$ (see Algorithm~\ref{alg:alg} for a detailed description). Then, \leonie{we construct a network~$N^*$ by running the algorithm on~$\cT^*$.} Finally, we combine~$N'$ and~$N^*$ by replacing~$a^*$ with the root of~$N^*$. This gives the final network~$N$.

\subsection{Outline}\label{sec:combine}

We work with multisets of trinets and binets because these may arise when collapsing or restricting trinet sets. Hence, let~$\cT$ be a multiset of binets and trinets. The high-level idea of the algorithm is to first find a minimal cut-arc set~$A$. Then we construct~$\cT^*$ by collapsing~$A$ to a single leaf~$a^*$ and find a network~$N^*$ for~$\cT^*$ recursively. The next step is to construct~$\cT'$ from~$\cT$ by restricting to the taxa in~$A$ and to find a simple network~$N'$ for~$\cT'$. Finally, we construct~$N$ from~$N^*$ and~$N'$ by replacing~$a^*$ by~$N'$. The pseudo code is in Algorithm~\ref{alg:alg}.

\begin{figure*}
    \centering
    \includegraphics{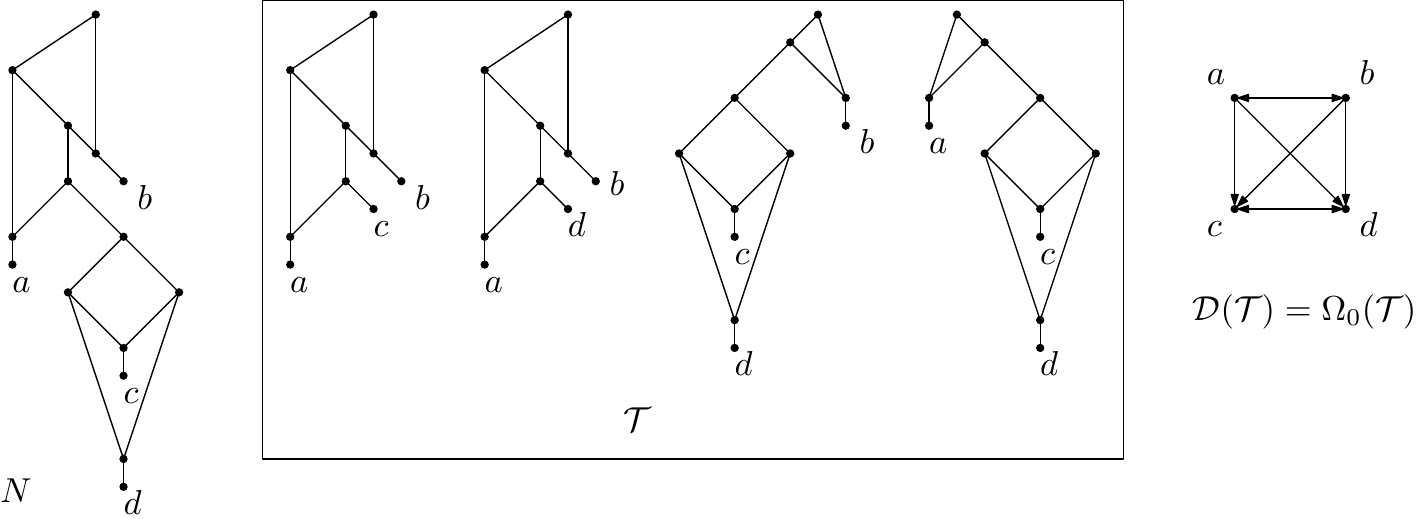}
    \caption{A level-2 network~$N$, its set of trinets~$\cT=\cT(N)$ and the digraph~$\Omega_0(\cT)=\cD(\cT)$. The set~$\{c,d\}$ is the only minimal sink set in $\Omega_0(\cT)$ and the only minimal cut-arc set in~$N$. \label{fig:cutarcs}}
\end{figure*}

\SetKwComment{Comment}{//}{}
\begin{algorithm}[h]
\KwData{Multiset~$\cT$ of \leonie{level-2} trinets and (possibly) binets on taxon set~$X$.}
\KwResult{Level-2 phylogenetic network~$N$ on~$X$.}
Find a cut-arc set~$A$ using Algorithm~\ref{alg:cut-arc-set}\;
\Comment{Find network~$N^*$ with~$A$ collapsed}
Initialize $\cT^*=\emptyset$ and let~$a^*\notin X$ be a new taxon\;
\For{$T\in\cT$ with $L(T)\setminus A\neq\emptyset$}{
\lIf{$L(T)\cap A=\emptyset$}
{Add~$T$ to~$\cT^*$}
\Else{Pick~$a\in L(T)\cap A$\;
Construct $T|(L(T)\setminus A\cup\{a\})$\;
Relabel~$a$ to~$a^*$ and add the resulting trinet or binet to~$\cT^*$\;
}
}
%\Comment{Recurse on~$\cT^*$}
Construct~$N^*$ from~$\cT^*$ by recursively running this algorithm\;
\Comment{Find simple network~$N'$ on~$A$}
$\cT' \getss \{T|(L(T)\cap A) \,\mid\, T\in\cT\ ,\, |L(T)\cap A|\geq 2\}$\;
Construct a simple network~$N'$ for~$\cT'$ using Algorithm~\ref{alg:simple}\;
\Comment{Combine~$N'$ and~$N^*$}
\If{$A\neq X$}{
\Return the network constructed from~$N^*$ and~$N'$ by identifying~$a^*$ with the root of~$N'$
}\lElse{
\Return $N'$
}
\caption{Constructing level-2 networks from trinets\label{alg:alg}}
\end{algorithm}

Within our \leo{explanation} of the algorithm we will also explain why in case the underlying set of $\cT$ is $\tb(N)$ for some recoverable level-2 network~$N$, the algorithm correctly reconstructs~$N$.

\subsection{Finding a minimal cut-arc set}\label{sec:cutarcs}

We first find minimal cut-arc sets using the digraphs~$\Omega_i(\cT)$ which were introduced in~\cite{trilonet} for level-1 networks and are defined as follows. See Figure~\ref{fig:cutarcs} for an example.

\begin{definition}
\label{def:omega}
Given a multiset~$\cT$ of binets and trinets and~$i\geq 0$, $\Omega_i(\cT)$ is the digraph with vertex set~$L(\cT)$ and an arc~$(x,y)$ if at most~$i$ trinets~$T\in\cT$ with~$x,y\in L(T)$ have a minimal cut-arc set not containing~$y$.
\end{definition}

A \emph{sink set} in a digraph~$D=(V,A)$ is a set~$U\subseteq V$ such that there is no arc~$(u,v)\in A$ with~$u\in U$ and~$v\notin U$. A sink set~$U$ is \emph{minimal} if~$|U|>1$ and there is no sink set~$W$ with~$|W|>1$ and~$W\subsetneq U$. A \emph{strongly connected component} of a digraph is a maximal subgraph~$D'=(V',A')$ containing, for any~$u,v\in V'$, a directed path from~$u$ to~$v$ and from~$v$ to~$u$.

If~$N$ is a level-1 network, minimal sink sets in $\Omega_i(\cT(N))$ correspond to minimal cut-arc sets in~$N$~\cite{trilonet}. To extend this result to level-2 networks, we will use the following theorem, which is a special case of \cite[Theorem~7.3]{huber2019hierarchies}. It uses the \emph{closure digraph}~$\cD(\cT)$ of a set~$\cT$ of trinets, which was introduced in~\cite{trilonet} and is defined as follows. Its vertex set is~$X=\cup_{T\in\cT}L(T)$ and it has an arc~$(x,y)$ if, for all~$z\in X\setminus\{x,y\}$, there exists a trinet on~$\{x,y,z\}$ in~$\cT$ in which~$y$ is a descendant of $LSA(x,z)$.

\begin{theorem}\cite{huber2019hierarchies}\label{thm:sinksets} Let~$N$ be a binary level-2 network on~$X$ and~$A\subseteq X$. Then~$A$ is minimal cut-arc set of~$N$ if and only if~$A$ is a minimal sink set of the closure digraph~$\cD(\cT(N))$.
\end{theorem}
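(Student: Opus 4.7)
The plan is to derive the theorem as a specialization of \cite[Theorem~7.3]{huber2019hierarchies}, the broader hierarchy result already invoked in the statement. I will outline both directions directly, as this makes the specialization concrete and highlights where the level-2 restriction enters.

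For the forward implication, suppose $A$ is the set of descendant leaves of a cut-arc $(u,v)$ of~$N$, with $|A|\geq 2$. To show that $A$ is a sink set of $\cD(\cT(N))$ I pick $x\in A$ and $y\in X\setminus A$ and exhibit a witness $z$ certifying that the arc $(x,y)$ is absent. The natural choice is any $z\in A\setminus\{x\}$: since $v$ lies on every path from the root to $x$ and to $z$, the $LSA$ of $\{x,z\}$ in $N$ lies at or below $v$. Because $y$ is not a descendant of $v$, it is not a descendant of this $LSA$ in $N$ either, and one checks that this witness persists under restriction to the trinet on $\{x,y,z\}$ since the relevant $LSA$-vertex is a tree-vertex of outdegree $2$ and is therefore not suppressed. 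Hence $(x,y)$ is not an arc of $\cD(\cT(N))$, so $A$ is a sink set. Minimality follows by contradiction: a proper sink subset $A'\subsetneq A$ with $|A'|>1$ would, running the forward argument inside the biconnected component just below~$v$, yield a cut-arc set strictly smaller than~$A$, contradicting the minimality of~$A$.

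The reverse implication is the delicate part. The plan is to combine the forward direction with the trinet-encoding theorem stated just above to see that the map $A\mapsto A$ from minimal cut-arc sets to minimal sink sets is well defined and injective, and then to establish surjectivity by a generator case analysis. For each of the four level-2 generators shown in Figure~\ref{fig:gen}, I would argue that any two leaves $x,y$ lying strictly inside the same biconnected component $B$ of $N$ are joined by arcs in \emph{both} directions of $\cD(\cT(N))$: for a suitable choice of $z$ outside $B$, the trinet $N|\{x,y,z\}$ places $y$ as a descendant of $LSA(\{x,z\})$, and symmetrically. Consequently each biconnected component contributes a single strongly connected block to $\cD(\cT(N))$, so the only minimal sink sets are those descending from a single cut-arc below such a block. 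The main obstacle is precisely this case analysis, which has to cover the four generators and, within each, the possible symmetric or asymmetric positions of the sides and reticulations; the cleanest route, and presumably the one the authors take, is simply to invoke \cite[Theorem~7.3]{huber2019hierarchies} after checking that recoverable binary level-2 networks satisfy its hypotheses.
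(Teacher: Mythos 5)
The paper itself gives no proof of this statement: it is imported verbatim as a special case of \cite[Theorem~7.3]{huber2019hierarchies}, so your closing remark --- that the cleanest and intended route is simply to invoke that theorem --- is exactly the paper's approach, and on that level your proposal matches.

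The supplementary direct argument you sketch, however, does not stand on its own. In the forward direction, the sink-set part is fine (choosing the witness $z\in A\setminus\{x\}$ and checking that the non-descendance of $y$ from $LSA(x,z)$ survives restriction to the trinet is the right idea), but your minimality step is circular: to rule out a proper sink subset $A'\subsetneq A$ you assert it ``would yield a cut-arc set strictly smaller than~$A$,'' which is precisely the sink-set-implies-cut-arc-set direction you have not yet established; the forward argument cannot be ``run inside the biconnected component'' to produce a cut-arc from a sink set. In the reverse direction you correctly identify the needed claim --- that two leaves strictly inside the same biconnected component are joined by arcs in both directions of $\cD(\cT(N))$, so that each such component collapses to one strongly connected block --- but the generator-by-generator verification is only announced, not carried out, and even granting it one still has to argue that no minimal sink set can straddle a cut-arc or sit inside a tree-like part of~$N$. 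None of this is a defect relative to the paper, which proves nothing here either, but if you intend the direct proof to be self-contained these two gaps are genuine and would have to be filled (essentially by reproducing the argument of \cite{huber2019hierarchies}).
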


%\todo{does \cite{huber2019hierarchies} assume recoverable? No.}

The next lemma shows that the closure digraph $\cD(\cT)$ is equal to $\Omega_0(\cT)$ if~$\cT$ is the set of trinets of some network.

\begin{lemma}\label{lem:closure}
If~$\cT=\cT(N)$ for some network~$N$ on~$X$, then~$\Omega_0(\cT)=\cD(\cT)$.
\end{lemma}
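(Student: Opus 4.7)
The plan is to verify the arc-equality $\Omega_0(\cT(N)) = \cD(\cT(N))$ pair by pair. Fix distinct leaves $x,y \in X$; since $\cT = \cT(N)$, for every $z \in X\setminus\{x,y\}$ there is a unique trinet $T_z = N|\{x,y,z\}$ in $\cT$. Both the conditions ``$(x,y)\in \Omega_0(\cT(N))$'' and ``$(x,y)\in \cD(\cT(N))$'' are universal statements over $z$, each clause of which refers only to $T_z$. It therefore suffices to prove the following local biconditional for any recoverable binary trinet $T$ on three leaves $\{x,y,z\}$: $T$ has no minimal cut-arc set avoiding $y$ if and only if $y$ is a descendant of $LSA(x,z)$ in $T$.

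First I would observe that the only candidate minimal cut-arc set of $T$ that avoids $y$ is $\{x,z\}$: singletons are ruled out by the size-at-least-two condition, the full leaf set contains $y$, and $\{x,z\}$ (being of size two) is automatically minimal whenever it is a cut-arc set at all. So the biconditional above reduces to: $\{x,z\}$ is a cut-arc set of $T$ if and only if $y$ is not a descendant of $LSA(x,z)$ in $T$.

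The forward direction is immediate: if some cut-arc $(u,v)$ has descendant-leaf set $\{x,z\}$, then $v$ is a stable ancestor of both $x$ and $z$, so $LSA(x,z)$ is a descendant of $v$, while $y$ lies outside the subgraph below $v$, and hence cannot be a descendant of $LSA(x,z)$. For the reverse direction, set $v = LSA(x,z)$ and assume $y$ is not a descendant of $v$, so that the descendant leaves of $v$ are exactly $\{x,z\}$. I would first rule out ``bypass'' arcs: any arc from outside $v$'s subtree into a strict descendant of $v$ would yield a root-to-$x$ (or root-to-$z$) path avoiding $v$, contradicting the stability of $v$. If $v$ has indegree one, the arc entering $v$ is then itself a cut-arc with descendant leaves $\{x,z\}$. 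If instead $v$ is a reticulation, I would examine its unique out-neighbour $v'$: a second parent of $v'$ lying outside $v$'s subtree would again yield a bypass, and a second parent strictly inside $v$'s subtree would either duplicate the arc $v\to v'$ or force a directed cycle through $v'$, since $v$'s only child is $v'$. Hence $v'$ has indegree one, and a further application of the no-bypass argument inside $v'$'s subtree shows that $(v,v')$ is a cut-arc whose descendant leaves are $\{x,z\}$. In both cases $\{x,z\}$ is a cut-arc set.

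The main obstacle will be the reticulation case of the reverse direction: locating a concrete cut-arc below a reticulation $v = LSA(x,z)$ requires combining acyclicity, the absence of parallel arcs, and the ``no-bypass'' consequence of stability. Once that step is settled, the rest is a direct unwinding of the definitions of $\Omega_0(\cT)$ and $\cD(\cT)$ applied to the trinets $N|\{x,y,z\}$.
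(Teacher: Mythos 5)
Your proof is correct and takes essentially the same route as the paper's: both reduce the lemma to the per-trinet equivalence between $\{x,z\}$ being a cut-arc set and $y$ not being a descendant of $LSA(x,z)$, proved via the same ``no bypass arc into the descendant set of the LSA'' consequence of stability. The only difference is that you explicitly exhibit the cut-arc below $LSA(x,z)$, including the case that $LSA(x,z)$ is a reticulation --- a case that is in fact vacuous, since in a binary network the unique child of such a reticulation would itself be a lower stable ancestor of $\{x,z\}$.
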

\begin{proof}
First let~$(x,y)$ be an arc of~$\Omega_i(\cT)$. Assume that~$(x,y)$ is not an arc of $\cD(\cT)$. Then there exists a $z\in X\setminus\{x,y\}$ such that $y$ is not a descendant of $LSA(x,z)$ in the trinet~$T$ on~$\{x,y,z\}$. We now claim that the arc entering $LSA(x,z)$ is a cut-arc of~$T$. If it is not, then there is some arc $(u,v)$ of~$T$ with $v\neq LSA(x,z)$ such that~$u$ is not a descendant of~$LSA(x,z)$ and~$v$ is a descendant of~$LSA(x,z)$. This arc~$(u,v)$ must lie on a path from the root to at least one of~$x,y,z$. However, it cannot be on a path from the root to~$x$ or~$z$ because each such path passes through $LSA(x,z)$. Also, it cannot be on a path from the root to~$y$ because such a path does not contain any descendants of~$LSA(x,z)$. Hence, we can conclude that~$\{x,z\}$ is a cut-arc set, which contradicts the assumption that $(x,y)$ is an arc of~$\Omega_i(\cT)$.

Now let~$(x,y)$ be an arc of~$\cD(\cT)$ and let~$z\in X\setminus\{x,y\}$. Then~$y$ is a descendant of $LSA(x,z)$ in the trinet on~$\{x,y,z\}$ in~$\cT$. Hence, $\{x,z\}$ is not a cut-arc set. Since a minimal cut-arc set contains at least two leaves, it follows that~$T$ has no minimal cut-arc set not containing~$y$. It now follows that~$(x,y)$ is an arc of~$\Omega_0(\cT)$.
\end{proof}

Since we consider trinet sets that are not necessarily exactly the trinet set of some network, we cannot always simply use the digraph $\Omega_0(\cT)=\cD(\cT)$. In particular, it may happen that $\Omega_0(\cT)$ has no arcs. We therefore use the \leo{strategy described in Algorithm~\ref{alg:cut-arc-set}.}

\begin{algorithm}[h]
\KwData{Multiset~$\cT$ of \leonie{level-2} trinets and (possibly) binets on taxon set~$X$.}
\KwResult{Set~$A\subseteq X$.}
% Initialize~$\cT^\dagger$\;
% \For{each combination of three taxa~$x,y,z$ in~$X$}{
%     add a trinet on~$x,y,z$ with maximum multiplicity (if any) to~$\cT^\dagger$\;
% }
\For{$i=0,\ldots ,|X|-2$}{
Construct $\Omega_i(\cT)$ (see Definition~\ref{def:omega})\;
\If{$\Omega_i(\cT)$ has at least one arc}{
\If{$\Omega_i(\cT)$ has a strongly connected component that is a minimal sink set}{
\Return a smallest such set\;
}
\Else{Construct the condensed digraph~$C$ of $\Omega_i(\cT)$\;
Find a vertex~$p$ of~$C$ with a minimum number of children, over all vertices with at least one child\;
\Return the set of vertices of $\Omega_i(\cT)$ corresponding to~$p$ and its children\;}
}
}
\caption{Finding a cut-arc set\label{alg:cut-arc-set}}
\end{algorithm}

%following strategy, which is based on the approach used in~\cite{trilonet}. Let~$i$ be minimal such that $\Omega_i(\cT)$ has at least one arc. If $\Omega_i(\cT)$ has a strongly connected component that is a minimal sink set, choose a smallest such set. Otherwise, construct what we call the condensed digraph~$C$ from $\Omega_i(\cT)$ by contracting each strongly connected component. Let~$p$ be a vertex of~$C$ with a minimum number of children over all vertices with at least one child. We return the vertices of~$\Omega_i(\cT)$ corresponding to~$p$ and its children.%\todo{Sjors compared different stategies. Here I just describe one of them.}

%Strategy II: let~$i$ be minimum such that $\Omega_i(\cT)$ has at least one strongly connected component that is a minimal sink set. Choose a smallest such set.

From Theorem~\ref{thm:sinksets} and Lemma~\ref{lem:closure} follows that this process produces a minimal cut-arc set if the input set is equal to~$\cT(N)$ for some level-2 network~$N$. Since~$\Omega_0(\cT)$ is not affected by binets or multiple copies of trinets, the same holds when~$\cT$ is a multiset of binets and trinets with underlying set~$\tb(N)$.

\subsection{Constructing a simple network}\label{sec:simple}

Once we have found a minimal cut-arc set~$A$, we need to construct the part of the network below this cut-arc. To do this, we restrict~$\cT$ to~$\cT' = \{T|(L(T)\cap A) \,\mid\, T\in\cT,\, |L(T)\cap A|\geq 2\}$ and find a simple network for~$\cT'$.

If the underlying set of~$\cT$ is~$\tb(N)$ with~$N$ a level-2 network and~$A$ is a minimal cut-arc set of~$N$, then the underlying set of~$\cT'$ is~$\tb(N')$ with~$N'$ either a tree with two leaves or a simple network.

\subsubsection{The number of reticulations}\label{sec:retic}

Let~$p_k$ be the fraction of the trinets in~$\cT'$ that are strictly level-$k$ and let~$n=|L(\cT')|$. If~$n=2$, we construct a network equal to a binet with maximum multiplicity in~$\cT$. Otherwise, if $p_2 < \frac{n-2}{2{n \choose 3}}$, we set the number of reticulations~$k$ to~1, else we set~$k$ to~$2$.

Suppose~$\cT'$ has underlying set $\tb(N')$ with~$N'$ either a tree with two leaves or a level-2 network that is simple (note that it may also be level-1). If~$N'$ has two leaves then all binets in~$\cT'$ are equal to~$N'$ and the algorithm correctly constructs~$N'$. Now assume~$n\geq 3$. If~$N'$ is a simple level-1 network, then $p_2=0$, so the algorithm correctly sets the number of reticulations to~$1$. Finally, suppose~$N'$ is a simple strictly level-2 network. Then $|\cT'| = {n\choose 3}$. Moreover, at least $n-2$ of the trinets in~$\cT'$ are strictly level-2, since any crucial trinet is strictly level-2. Hence, we have $p_2 \geq \frac{n-2}{{n \choose 3}} \geq \frac{n-2}{2{n \choose 3}}$ and the algorithm correctly sets the number of reticulations to~$2$.

\subsubsection{Leaves on reticulation sides}

Let~$k$ be the number of reticulations determined in the previous subsection. Let~$G$ be a generator that is the underlying generator of the maximum number of strictly level-$k$ trinets in~$\cT$. Let~$\cT_G$ be the set of trinets in~$\cT$ that have underlying generator~$G$. 

For each~$x\in L(\cT_G)$ and for each set of symmetric reticulation sides~$C$ of~$G$, let $p_{x,C}$ denote the fraction of trinets in~$\cT_G$ that have leaf~$x$ on a side in~$C$. We proceed greedily as follows. Pick~$x,C$ maximizing~$p_{x,C}$ over all leaves~$x$ that have not been assigned to a side yet and over all~$C$ containing at least one side that has not been assigned a leaf yet. Assign~$x$ to an arbitrary side in~$C$. Repeat until all reticulation sides have been assigned a leaf. Attach each leaf assigned to a reticulation side to this side. %\todo{I'm not sure whether this is exactly what Sjors does.}

Let~$\cT_G^r$ be the set of trinets in~$\cT$ that have underlying generator~$G$ and that have an automorphism such that each reticulation side of~$G$ contains its assigned leaf. From now on, we assume that each reticulation side of the generator of each trinet in~$\cT_G^r$ contains its assigned leaf.

Suppose the underlying set of~$\cT$ is~$\tb(N)$ for some simple level-2, strictly level-$k$, network~$N$. Then all strictly level-$k$ trinets have the same underlying generator as~$N$. Moreover, for each set~$C$ of symmetric reticulation sides, $p_{x,C}=1$ for all leaves~$x$ that are on a side in~$C$ in~$N$ and $p_{x,C}=0$ otherwise. Hence, the algorithm correctly assigns leaves to sets of symmetric reticulation sides. It can assign leaves to an arbitrary side within this set since level-2 generators have at most one set of symmetric reticulation sides (see Figure~\ref{fig:gen}), and those are symmetric.

\subsubsection{Leaves per set of symmetric arc sides}
For each leaf~$x\in L(\cT^r_G)$ that has not been assigned to a reticulation side, assign~$x$ to a set of symmetric arc sides~$C$ of~$G$, maximizing the fraction of trinets in~$\cT^r_G$ that have leaf~$x$ on a side in~$C$.

% For each~$x\in L(\cT^r_G)$ and for each set of symmetric arc sides~$C$ of~$G$, let $p_{x,C}$ denote the fraction of the trinets in~$\cT^r_G$ that have leaf~$x$ on a side in~$C$. We proceed again greedily. Pick~$x,C$ maximizing~$p_{x,C}$ over all leaves~$x$ that have not been assigned to a set of sides yet and over all~$C$. Assign~$x$ to~$C$. Repeat until all leaves have been assigned to a set of sides. \todo{we use the same letter~$p$ in different subsections for different scores}

Suppose the underlying set of~$\cT$ is~$\tb(N)$ for some simple level-2 network~$N$. Then it can be argued as in the previous subsection that the algorithm assigns each leaf to the set of symmetric arc sides corresponding to its location in~$N$.

\subsubsection{Leaves per arc side}

Consider a set of symmetric arc sides~$C$ and the set of leaves~$X_C$ assigned to~$C$. For~$x,y\in X_C$, let~$\cT_{xy}$ denote the set of simple trinets in~$\cT$ containing both~$x$ and~$y$, and let~$q_{xy}$ denote the fraction of trinets in~$\cT_{xy}$
in which~$x$ and~$y$ are on the same side of the underlying generator, with~$q_{xx}=1$. We define the following score for~$x\neq y$:

\begin{equation*}
r_{xy} = 3\sum_{z\in X_C} \min\{q_{xz},q_{yz}\} - \sum_{z\in X_C} q_{xz} - \sum_{z\in X_C} q_{yz}.
\end{equation*}

The main idea of this score function is that, assuming the trinets come from some level-2 network, $r_{xy}\geq 0$ if and only if~$x$ and~$y$ are on the same side.

The algorithm proceeds as follows. Create a partition~$\cP_C$ of~$X_C$, initially consisting of only singletons. While $|\cP_C|>|C|$ or there exist $x\neq y$ with~$r_{xy}>0$, pick a pair~$X,Y\in \cP_C$ maximizing
\begin{equation}
r_{XY}=\frac{1}{|X||Y|}\sum_{x\in X, y\in Y}r_{xy}.\label{eq:r}
\end{equation}
Merge sets~$X$ and~$Y$ in~$\cP_C$.

Finally, assign, injectively at random, the parts of~$\cP_C$ to the sides in~$C$.

Suppose the underlying set of~$\cT$ is~$\tb(N)$ for some simple level-2 network~$N$. The only level-2 generators with symmetric arc sides (see Figure~\ref{fig:gen}) are $1$ and~$2d$ with $C=\{L,R\}$ and $2c$ with~$C=\{L_i,R_i\}$, $i\in\{1,2,3\}$. If~$x,y$ are on the same side then~$q_{xy}=1$ and otherwise we have~$q_{xy}=0$. We can now see that if~$x,y$ are on the same side then $r_{xy}$ is equal to the number of leaves on that side (since each of the three sums is equal to the number of leaves on that side) which is at least~$2$. If, on the other hand,~$x,y$ are on different sides, then~$r_{xy}\leq -2$ (since the first sum is~$0$ and the other two sums are at least~$1$). Hence, the algorithm correctly splits the leaves in~$X_C$ into two sets corresponding to the leaves on side~$L_i$ and~$R_i$ (or~$L$ and~$R$). For generators~$1$ and~$2d$ it does not matter which set is assigned to which side, by symmetry. For generator~$2c$, this does matter. It is done randomly here and corrected if necessary in the next subsection.

\subsubsection{Side alignment}
The following is only necessary when the underlying generator~$G$ is generator~$2c$, see Figure~\ref{fig:gen}, since it contains more than one set of symmetric arc sides. Call its sets of symmetric arc sides $C_1=\{L_1,R_1\}$, $C_2=\{L_2,R_2\}$ and~$C_3=\{L_3,R_3\}$. We have to consider swapping sides $L_2,R_2$ and/or $L_3,R_3$ (i.e., assign the leaves assigned to~$L_2$ to~$R_2$ and vice versa and/or assign the leaves assigned to~$L_3$ to~$R_3$ and vice versa). From the four possibilities, we choose the one maximizing the following score:
%\todo{We may want to take trinets into account where $x$ and $y$ are not below each other.}
\begin{equation}\label{eq:usum}
u_{L_1,L_2} + u_{L_1,L_3} + u_{L_2,L_3} + u_{R_1,R_2} + u_{R_1,R_3} + u_{R_2,R_3}
\end{equation}
with
\begin{equation}\label{eq:u}
u_{S,T} = \sum_{x\in X_S\\ y\in X_{T}} q_{xy} - |X_S||X_T|.
\end{equation}

Suppose the underlying set of~$\cT$ is~$\tb(N)$ for some simple level-2 network~$N$ with underlying generator~$2c$. Then we have that~$q_{xy}=1$ if~$x\in L_i,y\in L_j$ or~$x\in R_i,y\in R_j$ and $q_{xy}=0$ if~$x\in L_i,y\in R_j$ or vice versa. Hence,~$u_{L_iL_j}=u_{R_iR_j}=0$ and $u_{L_iR_j},u_{R_iL_j}<0$. Therefore, choosing the assignment maximizing~\eqref{eq:usum}, out of all possible assignments, chooses the assignment corresponding to~$N$.

\subsubsection{Ordering the leaves on the arc sides}

Consider a side~$S$ and the set of leaves~$X_S$ assigned to side~$S$. Let~$\cT^s_{xy}$ denote the set of simple trinets in~$\cT$ containing both~$x$ and~$y$ and both on the same side. Let~$a_{xy}$ denote the fraction of trinets in~$\cT^s_{xy}$ in which the parent of~$x$ is an ancestor of~$y$. Let~$\pi$ be an ordered list of leaves, which is initially empty. Find a leaf~$x\in X_S\setminus\pi$ maximizing
\begin{equation}\label{eq:asum}
\sum_{y\in X_S\setminus\pi} a_{xy}-a_{yx}.
\end{equation}
Append leaf~$x$ to~$\pi$ and continue until~$\pi$ is a permutation of~$X_S$. The permutation~$\pi$ then describes the ordering of the leaves on side~$S$. Attach the list of leaves~$\pi$ to side~$S$.

Suppose the underlying set of~$\cT$ is~$\tb(N)$ for some simple level-2 network~$N$. For two leaves~$x,y$ on the same arc side~$S$ of~$N$, we have that~$a_{xy}=1$ if the parent of~$x$ is an ancestor of~$y$ and~$a_{xy}=0$ otherwise. Hence, \eqref{eq:asum} is equal to the number of leaves that have not been added to the permutation~$\pi$ yet and are below~$x$ on side~$S$, minus the number of leaves that have not been added to the permutation~$\pi$ yet and are above~$x$ on side~$S$. Therefore, the algorithm constructs the ordering~$\pi$ of leaves on side~$S$ in~$N$.

\leo{The pseudo code for constructing a simple network is in Algorithm~\ref{alg:simple}.}

\begin{algorithm}
\KwData{Multiset~$\cT'$ of \leonie{level-2} trinets and (possibly) binets on taxon set~$X$.}
\KwResult{Simple level-2 network~$N'$ on~$X$.}
\Comment{Determine the level~$k$}
Let $n\getss |L(\cT')|$ and~$p_2\getss$ the fraction of trinets in~$\cT'$ that are level-$2$\;
\If{$n=2$}{
\Return an arbitrary network with maximum multiplicity in~$\cT'$
}
\If{$p_2<\frac{n-2}{2 {{n}\choose {3}}}$}{
$k\getss 1$
}\Else{
$k\getss 2$
}
\Comment{Determine the generator}
$G\getss$ the underlying generator of the maximum number of level-$k$ trinets in~$\cT'$\;
$N'\getss G$\;
\Comment{Assign leaves to reticulation sides}
$\cT_G\getss$ the set of trinets in~$\cT'$ that have underlying generator~$G$\;
\While{there is a reticulation side of~$G$ that has not been assigned a leaf}{
Find~$x\in L(\cT_G)$ that has not been assigned to a side and a set of symmetric reticulation sides~$C$ that have not all been assigned a leaf, \leo{maximizing} the fraction of trinets in~$\cT_G$ that have leaf~$x$ on a side in~$C$\;
Assign~$x$ to an arbitrary side in~$C$ and attach~$x$ to this side in~$N'$\;
}
$\cT_G^r\getss$ the set of trinets in~$\cT'$ that have underlying generator~$G$ and that have an automorphism such that each reticulation side of~$G$ contains its assigned leaf\; %Relabel generators of trinets in~$\cT_G^r$ s.t. each reticulation side contains its assigned leaf\;
Relabel the sides of the generators of the trinets in~$\cT_G^r$ such that each reticulation side contains its assigned leaf\;
\Comment{Assign leaves to sets of symmetric arc sides}
\For{each leaf~$x$ that has not been assigned to a reticulation side}{
Assign~$x$ to a set of symmetric arc sides~$C$ maximizing the fraction of trinets in~$\cT^r_G$ that have leaf~$x$ on a side in~$C$\;
}
Continued in Algorithm~\ref{alg:continued}
\caption{Constructing a simple level-2 network\label{alg:simple}}
\end{algorithm}
% SPLIT THE ALGORITHM
\begin{algorithm}
\Comment{Assign leaves to arc sides}
\For{each set of symmetric arc sides~$C$}{
$\cP_C\getss$ partition of~$X_C$ containing only singletons\;
$q_{xy}\getss$ the fraction of simple trinets containing~$x,y$ in which~$x,y$ are on the same side\;
$\displaystyle r_{xy}\getss 3\sum_{z\in X_C} \min\{q_{xz},q_{yz}\} - \sum_{z\in X_C} q_{xz} - \sum_{z\in X_C} q_{yz}$\;
$\displaystyle r_{XY}\getss \frac{1}{|X||Y|}\sum_{x\in X,y\in Y}r_{xy}$\;
\While{there exist $X,Y\in\cP_C$ with~$r_{XY}>0$, or $|\cP_C|>|C|$}{
\label{line:r} Find a pair~$X,Y\in \cP$ maximizing $r_{XY}$\; %from~\eqref{eq:r}\;
Merge sets~$X$ and~$Y$ in~$\cP_C$ and update~$r_{XY}$\;
}
\While{there is a leaf in~$X_C$ that has not been assigned to a side}{
Pick a set $Z\in\cP_C$ containing a leaf that has not been assigned to a side\;
Pick a side~$S\in C$ that has not been assigned any leaves\;
Assign the leaves from~$Z$ to side~$S$\;
}
}
\Comment{Align sides}
\If{$G$ is generator $2c$ from Figure~\ref{fig:gen}}{
Find bijections $f:\{L_2,R_2\}\to \{L_2,R_2\}$ and $g:\{L_3,R_3\}\to \{L_3,R_3\}$ maximizing %(see~\eqref{eq:u})
$u_{L_1,f(L_2)} + u_{L_1,g(L_3)} + u_{f(L_2),g(L_3)} + u_{R_1,f(R_2)} + u_{R_1,g(R_3)} + u_{f(R_2),g(R_3)}$\;
with $u_{S,T} = \displaystyle\sum_{x\in X_S, y\in X_{T}} q_{xy} - |X_S||X_T|$\;
Assign the leaves assigned to~$L_2,R_2,L_3,R_3$ to $f(L_2),f(R_2),g(L_3),g(R_3)$, respectively\;
}
\Comment{Order leaves on arc sides}
\For{each arc side~$S$ with set~$X_S$ of assigned leaves}{
$\cT^s_{xy}\getss$ the set of simple trinets in~$\cT'$ containing~$x$ and~$y$ on the same side\;
$a_{xy}\getss$ the fraction of trinets in~$\cT^s_{xy}$ in which the parent of~$x$ is an ancestor of~$y$\;
$\pi\getss ()$\;
\While{$\pi$ is not a permutation of~$X_S$}{
Find a leaf~$x\in X_S\setminus\pi$ maximizing $\sum_{y\in X_S\setminus\pi} a_{xy}-a_{yx}$ %\;
\leo{and append}~$x$ to~$\pi$\;
}
Attach the list of leaves $\pi$ to side~$S$ in~$N'$\;
}
\Return $N'$\;
\caption{Continuation of Algorithm~\ref{alg:simple}\label{alg:continued}}
\end{algorithm}

% \subsubsection{Combining networks}\label{sec:combining}

% Once we have constructed a simple network~$N'$ for the cut-arc set~$A$, we need to extend it to a network on~$X$. For this, we construct a multiset~$\cT^*$ of trinets and binets from~$\cT$ by collapsing~$A$ to a single taxon~$a^*$ (see Algorithm~\ref{alg:alg} for a detailed description). Then, \leonie{we construct a network~$N^*$ by running the algorithm on~$\cT^*$.} Finally, we combine~$N'$ and~$N^*$ by replacing~$a^*$ with the root of~$N^*$. This gives the final network~$N$.

\subsection{Theoretical result}

%The pseudo code of the whole algorithm is given in Algorithm~\ref{alg:alg}, which uses Algorithm~\ref{alg:cut-arc-set} to find cut-arc sets and Algorithm~\ref{alg:simple} to construct simple networks.

The following theorem shows that the algorithm is guaranteed to reconstruct a level-2 network from its set of trinets.

%\vspace{2cm}

\begin{theorem}\label{thm:induced}
If~$N$ is a recoverable, binary level-2 network on~$X$ with~$|X|\geq 3$, then Algorithm~\ref{alg:alg} will output~$N$ when applied to input~$\cT=\cT(N)$.
\end{theorem}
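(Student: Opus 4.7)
The natural proof is by strong induction on $|X|$, piecing together the correctness remarks already scattered through Sections~\ref{sec:cutarcs} and~\ref{sec:simple}. For the base case $|X|=3$, Algorithm~\ref{alg:alg} either sets $A=X$ immediately or finds a cut-arc set of size two, and in both cases Algorithm~\ref{alg:simple} is applied to input whose underlying set equals $\tb(N)$ (or $\tb$ of the appropriate subnetwork); the paragraph-by-paragraph justifications in Section~\ref{sec:simple} then return $N$.

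For the inductive step I would proceed in three phases matching the three main blocks of Algorithm~\ref{alg:alg}. First, show that Algorithm~\ref{alg:cut-arc-set} on $\cT=\cT(N)$ returns a minimal cut-arc set $A$ of $N$: by Lemma~\ref{lem:closure} we have $\Omega_0(\cT)=\cD(\cT)$, and by Theorem~\ref{thm:sinksets} the minimal cut-arc sets of $N$ are exactly the minimal sink sets of $\cD(\cT)$. Since $N$ is recoverable and has at least three leaves, at least one such set exists, so $\Omega_0(\cT)$ has an arc and the $i=0$ iteration returns a minimal sink set.

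Second, verify that the collapsed multiset $\cT^*$ has underlying set $\tb(N^*)$, where $N^*$ is obtained from $N$ by replacing the pendant subnetwork on $A$ by a single new leaf $a^*$; this reduces to checking that, for any trinet $T\in\cT$ meeting $A$, restricting $T$ to $(L(T)\setminus A)\cup\{a\}$ for any chosen $a\in L(T)\cap A$ and relabelling $a$ to $a^*$ yields precisely the corresponding binet or trinet of $N^*$, independently of the choice of~$a$. Because $N^*$ is a recoverable binary level-2 network on strictly fewer leaves, the induction hypothesis applies and the recursive call returns $N^*$. Third, verify that $\cT'$ has underlying set $\tb(N')$, where $N'$ is the simple subnetwork of $N$ rooted at the head of the cut-arc entering $A$ (or a two-leaf binet when $|A|=2$); this is immediate from the definition of restriction and the minimality of~$A$. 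The correctness paragraphs in Section~\ref{sec:simple}, one for each stage of Algorithm~\ref{alg:simple} (level, generator, reticulation sides, symmetric arc-side classes, arc sides, side alignment for generator $2c$, and ordering along each arc side), then give that Algorithm~\ref{alg:simple} outputs precisely $N'$. Identifying $a^*\in N^*$ with the root of $N'$ glues the two pieces back into $N$.

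The main obstacle is the multiset-versus-underlying-set bookkeeping and a few corner cases. One must argue that neither the collapsing nor the restriction step alters the underlying set of the trinet multiset in a way that breaks the hypotheses used in Sections~\ref{sec:cutarcs} and~\ref{sec:simple}: for collapsing, every element of $\tb(N^*)$ must appear in $\cT^*$ regardless of the arbitrary choice of $a\in L(T)\cap A$; for restriction, every element of $\tb(N')$ must appear in $\cT'$. Both follow from $\cT(N)$ containing \emph{every} trinet of $N$, but they need to be spelled out carefully, along with checking that $N^*$ remains recoverable and binary level-2 so that the induction hypothesis is legitimately invoked. Once these points are verified, the theorem follows by combining the inductively produced $N^*$ with the $N'$ returned by Algorithm~\ref{alg:simple}.
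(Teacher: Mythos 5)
Your proposal is correct and follows essentially the same route as the paper: find a minimal cut-arc set via Lemma~\ref{lem:closure} and Theorem~\ref{thm:sinksets}, check that $\cT'$ and $\cT^*$ have underlying sets $\tb(N')$ and $\tb(N^*)$, invoke the correctness arguments of Section~\ref{sec:simple} for $N'$, recurse on $N^*$, and glue; the only differences are that the paper inducts on the number of vertices rather than on $|X|$, and that it explicitly handles the corner case where $N^*$ has only two leaves (which your appeal to the induction hypothesis, stated for $|X|\geq 3$, does not cover --- there one instead observes that $\cT^*$ contains only the binet $N^*$ itself).
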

\begin{proof}
The proof is by induction on the number of vertices of~$N$.

The base case is that~$N$ is a tree with~$3$ leaves and~$5$ vertices. Say that~$X=\{x,y,z\}$ and that~$\{x,y\}$ is the minimal cut-arc set. In this case, the algorithm will generate $A=\{x,y\}$ (see Section~\ref{sec:cutarcs}). The set~$\cT'$ contains only the tree on~$\{x,y\}$ and hence this is constructed as~$N'$ (see Section~\ref{sec:retic}). The set~$\cT^*$ contains only the tree on~$\{a^*,z\}$ and hence~$N^*$ is this tree. Combining~$N'$ and~$N^*$ then gives~$N$ (see Section~\ref{sec:combine}).

Now suppose~$N$ has at least~$6$ vertices. Then the algorithm finds a minimal cut-arc set~$A$ of~$N$ by Section~\ref{sec:cutarcs}. Let~$(u,v)$ be the corresponding cut-arc of~$N$ and let~$N'$ be the subnetwork of~$N$ rooted at~$v$. Let~$N^*$ be the network obtained from~$N$ by deleting all vertices of~$N'$ except for~$v$ and labelling~$v$ by~$a^*$. Then the underlying set of~$\cT'$ is~$\tb(N')$ and the underlying set of~$\cT^*$ is~$\tb(N^*)$. We have argued in Section~\ref{sec:simple} that the algorithm constructs~$N'$ (which is either a tree with two leaves or a simple network) from~$\cT'$. If~$A=X$ then~$N=N'$ since~$N$ is recoverable and we are done. Otherwise,~$N^*$ contains fewer vertices than~$N$. If~$N^*$ has at least three leaves, the algorithm constructs~$N^*$ from~$\cT^*$ by induction. If~$N^*$ has two leaves, then~$\cT^*$ only contains~$N^*$ and hence the algorithm constructs~$N^*$ (see Section~\ref{sec:retic}). In both cases, combining~$N'$ and~$N^*$ gives~$N$ (see Section~\ref{sec:combine}).
\end{proof}

It remains to analyze the running time of the algorithm.
%We do this under the assumption that the input trinet set has size $O(|X|^3)$. This is the case, for example, when at most one trinet (or at most a constant number of trinets) is given for each combination of three taxa.
Algorithm~\ref{alg:cut-arc-set} can be implemented efficiently to run in~$O(|\cT|+|X|^2)$ time (similarly to~\cite{trilonet} for level-1). The main idea here is to first compute $\phi(x,y)$, the number of trinets containing~$x$ and~$y$ that have a minimal cut-arc set not containing~$y$. This can be done in~$O(|\cT|+|X|^2)$ time since we need to loop through the set of trinets only once and update the values~$\phi(x,y)$ affected by this trinet~$T$, i.e., with~$x,y\in L(T)$. Finding a minimal cut-arc set in a trinet can be done in constant time as the size of each trinet is bounded by a constant \leonie{(as any trinet that is not recoverable can be ignored)}. After that, the digraph~$\Omega_i$ can be constructed in~$O(|X|^2)$ time, and this only needs to be done for the smallest~$i$ for which~$\phi(x,y)\leq i$ for at least one pair~$x,y$. The condensed digraph can be found with Tarjan's algorithm for computing strongly connected components in~$O(|X|^2)$ time. Since the number of generators, and the number of sides of each generator, is bounded by a constant, the bottleneck of Algorithm~\ref{alg:simple} is Line~\ref{line:r}. The values~$q_{xy}$ can be computed in~$O(|\cT|+|X|^2)$ time and the values~$r_{xy}$ in~$O(|X|^3)$ time. The values~$r_{XY}$ can be computed in~$O(|X|^2)$ time by looping through all~$x,y$ and updating the values of~$r_{XY}$ with~$x\in X$ and~$y\in Y$. This last step has to be repeated~$O(|X|)$ times. So Algorithm~$\ref{alg:simple}$ takes $O(|\cT|+|X|^3)$ time. Computing~$\cT'$ and~$\cT^*$ can be done in~$O(|\cT|+|X|)$ time since the size of the trinets is bounded by a constant. All of this has to be repeated~$O(|X|)$ times. Hence, the algorithm runs in time $O(|\cT||X|+|X|^4)$.

\section{Discussion}

We have presented an algorithm that, for an input set $\mathcal T$ of trinets (and possibly binets) with leaf-set $X$, outputs a level-2 network on $X$ with run time $O(|\cT||X|+|X|^4)$ and that is guaranteed to reconstruct a level-2 network from its set of trinets. Note that a variant of this algorithm is presented in ~\cite{sjors}. It should also be noted that our level-2 algorithm cannot be used to decide whether or not an arbitrary set of trinets is contained in some level-2 network or not in polynomial time. Indeed, if an arbitrary set of level-1 trinets is input into the algorithm, then it will output a level-1 network.
%5it is straight-forward to see that the algorithm must output a level-1 network. 
But it is known that deciding whether or not an arbitrary set of level-1 trinets is contained in a level-1 network is NP-complete \cite{huber2017reconstructing}.
%\todo{does that paper assume recoverable? No because all level-1 networks are recoverable.}
In addition, in light of these observations concerning level-1 trinets, our algorithm can be used build level-1 networks for more general inputs that the level-1 TriLoNet algorithm described in \cite{trilonet}, since TriLoNet's input is restricted to collections in which there is a trinet on every 3-subset of the leaf-set.

In terms of potential applications of our level-2 algorithm, in \cite{trilonet} a method is presented to derive collections of level-1 trinets from molecular sequence data; it would be interesting to see if this approach could be extended (or a new approach developed) to derive level-2 trinets as well. We expect that this could be quite complicated as level-2 trinets (and even level-1 trinets) can be quite complex, and so it may be necessary to restrict the level-1/level-2 building blocks to some subset of the list of potential 3-leaved networks.  

In another direction, in this paper we have shown that level-3 networks are not necessarily encoded by their trinets. However, %modulo symmetries
Figure~\ref{fig:counter_example} is essentially the only case in which a level-3 network is not encoded~\cite{leonie}, and so it would be interesting to investigate if there is a polynomial-time algorithm for constructing level-3 networks from trinets modulo this symmetry. Alternatively, it can be shown that the collection of 4-leaved networks (or quarnets) contained in a level-3 network encode the network~\cite{leonie}, and so new algorithms could be potentially developed to build level-3 networks from quarnets. In this vein, an interesting open question is whether or not a level-$k$ network is always encoded by its $(k+1)$-nets. Some partial results concerning this question are presented in~\cite{frank}.

\bibliographystyle{plain}
\bibliography{bib}

\begin{thebibliography}{10}

\bibitem{dagstuhl}
Magnus Bordewich, Britta Dorn, Simone Linz, and Rolf Niedermeier.
\newblock Algorithms and complexity in phylogenetics.
\newblock In {\em Dagstuhl Reports}, volume~9. Schloss Dagstuhl-Leibniz-Zentrum
  fuer Informatik, 2020.

\bibitem{Elworth2019}
R.~A.~Leo Elworth, Huw~A. Ogilvie, Jiafan Zhu, and Luay Nakhleh.
\newblock {\em Advances in Computational Methods for Phylogenetic Networks in
  the Presence of Hybridization}, pages 317--360.
\newblock Springer International Publishing, Cham, 2019.

\bibitem{gambette2012encodings}
Philippe Gambette and Katharina~T Huber.
\newblock On encodings of phylogenetic networks of bounded level.
\newblock {\em Journal of mathematical biology}, 65(1):157--180, 2012.

\bibitem{huber2013encoding}
Katharina~T Huber and Vincent Moulton.
\newblock Encoding and constructing 1-nested phylogenetic networks with
  trinets.
\newblock {\em Algorithmica}, 66(3):714--738, 2013.

\bibitem{huber2019hierarchies}
Katharina~T Huber, Vincent Moulton, and Taoyang Wu.
\newblock Hierarchies from lowest stable ancestors in nonbinary phylogenetic
  networks.
\newblock {\em Journal of Classification}, 36(2):200--231, 2019.

\bibitem{huber2010practical}
Katharina~T Huber, Leo van Iersel, Steven Kelk, and Radoslaw Suchecki.
\newblock A practical algorithm for reconstructing level-1 phylogenetic
  networks.
\newblock {\em IEEE/ACM Transactions on Computational Biology and
  Bioinformatics}, 8(3):635--649, 2010.

\bibitem{huber2017reconstructing}
Katharina~T Huber, Leo van Iersel, Vincent Moulton, Celine Scornavacca, and
  Taoyang Wu.
\newblock Reconstructing phylogenetic level-1 networks from nondense binet and
  trinet sets.
\newblock {\em Algorithmica}, 77(1):173--200, 2017.

\bibitem{huber2015much}
Katharina~T Huber, Leo van Iersel, Vincent Moulton, and Taoyang Wu.
\newblock How much information is needed to infer reticulate evolutionary
  histories?
\newblock {\em Systematic biology}, 64(1):102--111, 2015.

\bibitem{level2trinets}
Leo~van Iersel and Vincent Moulton.
\newblock Trinets encode tree-child and level-2 phylogenetic networks.
\newblock {\em Journal of mathematical biology}, 68(7):1707--1729, 2014.

\bibitem{frank}
Frank Janisse.
\newblock Encoding level-k phylogenetic networks, 2021.
\newblock MSc thesis, TU Delft,
  http://resolver.tudelft.nl/uuid:11939b58-b834-4073-8de8-b61d9a5f9a81.

\bibitem{sjors}
Sjors Kole.
\newblock Constructing level-2 phylogenetic networks from trinets, 2020.
\newblock MSc thesis, TU Delft,
  http://resolver.tudelft.nl/uuid:c699ea63-f8c8-40f7-8f07-11ac055c42e0.

\bibitem{leonie}
Leonie Nipius.
\newblock Rooted binary level-3 phylogenetic networks are encoded by quarnets,
  2020.
\newblock BSc thesis, TU Delft,
  http://resolver.tudelft.nl/uuid:a9c5a8d4-bc8b-4d15-bdbb-3ed35a9fb75d.

\bibitem{trilonet}
James Oldman, Taoyang Wu, Leo van Iersel, and Vincent Moulton.
\newblock {TriLoNet: Piecing Together Small Networks to Reconstruct Reticulate
  Evolutionary Histories}.
\newblock {\em Molecular Biology and Evolution}, 33(8):2151--2162, 2016.

\bibitem{semple2021trinets}
Charles Semple and Gerry Toft.
\newblock Trinets encode orchard phylogenetic networks.
\newblock {\em Journal of Mathematical Biology}, 83(3):1--20, 2021.

\bibitem{steel2016phylogeny}
Mike Steel.
\newblock {\em Phylogeny: discrete and random processes in evolution}.
\newblock SIAM, 2016.

\bibitem{willson2010regular}
Stephen Willson.
\newblock Regular networks can be uniquely constructed from their trees.
\newblock {\em IEEE/ACM Transactions on Computational Biology and
  Bioinformatics}, 8(3):785--796, 2010.

\end{thebibliography}
\end{document}